\newtheorem{theorem}{Theorem}[section]
\newtheorem{proposition}[theorem]{Proposition}
\newtheorem{lemma}[theorem]{Lemma}
\newtheorem{remark}{Remark}
\newcommand{\al}{\alpha}
\newcommand{\ga}{\gamma}
\newcommand{\s}{\sigma}
\newcommand{\be}{\begin{equation}}
\newcommand{\ee}{\end{equation}}
\newcommand{\bea}{\begin{eqnarray}}
\newcommand{\eea}{\end{eqnarray}}
\numberwithin{equation}{section}
\begin{document}

\title{Asymptotics of the Largest Eigenvalue Distribution of the Laguerre Unitary Ensemble}
\author{Shulin Lyu\thanks{School of Mathematics (Zhuhai), Sun Yat-sen University, Zhuhai 519082, China; e-mail: lvshulin1989@163.com}, Chao Min\thanks{Corresponding author, Chao Min, School of Mathematical Sciences, Huaqiao University, Quanzhou 362021, China; e-mail: chaomin@hqu.edu.cn}\; and Yang Chen\thanks{Department of Mathematics, Faculty of Science and Technology, University of Macau, Macau, China; e-mail: yangbrookchen@yahoo.co.uk}}


\date{\today}
\maketitle
\begin{abstract}
We study the probability that all the eigenvalues of $n\times n$ Hermitian matrices, from the Laguerre unitary ensemble with the weight $x^{\gamma}\mathrm{e}^{-4nx},\;x\in[0,\infty),\;\gamma>-1$, lie in the interval $[0,\alpha]$.
By using previous results for finite $n$ obtained by the ladder operator approach of orthogonal polynomials, we derive the large $n$ asymptotics of the largest eigenvalue distribution function with $\al$ ranging from 0 to the soft edge. In addition, at the soft edge, we compute the constant conjectured by Tracy and Widom [Commun. Math. Phys. {\bf 159} (1994), 151--174], later proved by Deift, Its and Krasovsky [Commun. Math. Phys. {\bf 278} (2008), 643--678]. Our results are reduced to those of Deift et al. when $\gamma=0$.
\end{abstract}

$\mathbf{Keywords}$: Laguerre unitary ensemble; Largest eigenvalue distribution; Asymptotic behavior;

Ladder operators; Fredholm determinant.

$\mathbf{Mathematics\:\: Subject\:\: Classification\:\: 2010}$: 15B52, 41A60, 42C05.

\section{Introduction}
We consider the Laguerre unitary ensemble (LUE for short) of $n\times n$ Hermitian matrices whose eigenvalues have the following joint probability density function \cite{Mehta}
$$
p(x_1,x_2,\ldots,x_n)=\frac{1}{Z_n}\prod_{1\leq i<j\leq n}(x_i-x_j)^2\prod_{k=1}^{n}w(x_k;\gamma,n),
$$
where $w(x;\gamma,n)$ is the scaled Laguerre weight
\[w(x;\gamma,n)=x^{\gamma}\mathrm{e}^{-4nx},\qquad\;x\in[0,\infty),\quad\gamma>-1,\]
and $Z_n$ is the partition function which reads
\[Z_n:=\int_{[0,\infty)^{n}}\prod_{1\leq i<j\leq n}(x_i-x_j)^2\prod_{k=1}^{n}w(x_k;\gamma,n)dx_{k}.\]

The probability that all the eigenvalues in this LUE lie in the interval $[0,\alpha]$, or the largest eigenvalue is not greater than $\alpha$, is given by
\begin{align}\label{probDn}
\mathbb{P}(n,\gamma,\alpha)=\frac{D_{n}(\alpha)}{D_{n}(\infty)},
\end{align}
where $D_n(\alpha)$ is defined by
\begin{align*}
D_{n}(\alpha):=&\frac{1}{n!}\int_{[0,\alpha]^{n}}\prod_{1\leq i<j\leq n}(x_i-x_j)^2\prod_{k=1}^{n}w(x_k;\gamma,n)dx_{k}.
\end{align*}
It is apparent that $D_n(\infty)=Z_n/n!$.

In this paper, we are interested in the asymptotic behavior of $\mathbb{P}(n,\gamma,\alpha)$ at the soft edge. Deift, Its and Krasovsky \cite{Deift} studied the special case $\mathbb{P}(n,0,\alpha)$, namely the largest eigenvalue distribution on $[0,\alpha]$ of LUE with the weight $\mathrm{e}^{-4nx}$.  By using the Riemann-Hilbert approach, they obtained the constant conjectured by Tracy and Widom \cite{TW1}, which appears in the asymptotic formula for $\mathbb{P}(n,0,\alpha)$ at the soft edge. We would like to generalize their results to general $\gamma$.

By changing variables $4nx_{\ell}=y_{\ell}, \ell=1,2,\ldots,n$, in $D_n(\alpha)$, we get
\bea
D_{n}(\alpha)&=&(4n)^{-n(n+\ga)}\cdot\frac{1}{n!}\int_{[0,4n\al]^{n}}\prod_{1\leq i<j\leq n}(y_i-y_j)^2\prod_{k=1}^{n}y_{k}^{\ga}\mathrm{e}^{-y_{k}}dy_{k}\nonumber\\
&=:&(4n)^{-n(n+\ga)}\widehat{D}_n(4n\al),\nonumber
\eea
where $\widehat{D}_n(\cdot)$ is defined by
\be\label{dnt}
\widehat{D}_{n}(t):=\frac{1}{n!}\int_{[0,t]^{n}}\prod_{1\leq i<j\leq n}(x_i-x_j)^2\prod_{k=1}^{n}x_k^{\ga}\mathrm{e}^{-x_k}dx_{k}.
\ee
It follows from \eqref{probDn} that
\begin{align}\label{probDn1}
\mathbb{P}(n,\gamma,\alpha)=\frac{\widehat{D}_{n}(4n\alpha)}{\widehat{D}_{n}(\infty)}.
\end{align}

Denoting by $\widehat{\mathbb{P}}(n,\ga,t)$ the probability that the largest eigenvalue of $n\times n$ Hermitian matrices is $\leq t$ in the LUE with the normal Laguerre weight $x^{\ga}\mathrm{e}^{-x}$, we have (see \cite{Lyu2017})
\be\label{pr}
\widehat{\mathbb{P}}(n,\ga,t)=\frac{\widehat{D}_{n}(t)}{\widehat{D}_{n}(\infty)}.
\ee
Note that $\widehat{D}_{n}(\infty)$ has the following closed-form expression \cite[p.321 (17.6.5)]{Mehta},
\begin{align}
\widehat{D}_{n}(\infty)=&\frac{1}{n!}\prod_{j=1}^{n}\Gamma(j+1)\Gamma(j+\ga)\nonumber\\
=&\frac{G(n+1)G(n+\ga+1)}{G(\ga+1)},\label{dni}
\end{align}
where $G(\cdot)$ is the Barnes $G$-function which satisfies the relation
$$
G(z+1)=\Gamma(z)G(z),\qquad\qquad G(1):=1.
$$
See \cite{Barnes,Voros,Choi} for more properties of this function.

A combination of \eqref{probDn1} and \eqref{pr} gives us a connection between the largest eigenvalue distribution of LUE with the weight $x^{\gamma}\mathrm{e}^{-4nx}$ and the weight $x^{\ga}\mathrm{e}^{-x}$:
$$
\mathbb{P}(n,\gamma,\alpha)=\widehat{\mathbb{P}}(n,\ga,4n\alpha).
$$
Therefore, to study $\mathbb{P}(n,\gamma,\alpha)$, we first turn our attention to $\widehat{\mathbb{P}}(n,\ga,t)$.

It is well known that the gap probability $\widehat{\mathbb{P}}(n,\ga,t)$ that the interval $(t,\infty)$ contains no eigenvalues, can be expressed as a Fredholm determinant \cite[p.109 (5.42)]{Deift1999}, namely,
$$
\widehat{\mathbb{P}}(n,\ga,t)=\det\left(I-K_{n}\mathlarger {\chi}_{(t,\infty)}\right),
$$
where $\mathlarger{\mathlarger {\chi}}_{(t,\infty)}(\cdot)$ is the characteristic function of the interval $(t,\infty)$ and the integral operator $K_{n}\mathlarger{\mathlarger {\chi}}_{(t,\infty)}$  has kernel $K_n(x,y)\mathlarger{\mathlarger {\chi}}_{(t,\infty)}(y)$, with $K_n(x,y)$ given by the Christoffel-Darboux formula \cite{Szego},
\begin{align*}
K_{n}(x,y)=&\sum_{j=0}^{n-1}\varphi_j(x)\varphi_j(y)\\
=&\sqrt{n(n+\ga)}\:\frac{\varphi_{n-1}(x)\varphi_{n}(y)-\varphi_{n-1}(y)\varphi_{n}(x)}{x-y}.
\end{align*}
Here $\{\varphi_j(x)\}_{j=0}^{\infty}$ are obtained by orthonormalizing the sequence $\{x^jx^{\gamma/2}\mathrm{e}^{-x/2}\}_{j=0}^{\infty}$ over $[0,\infty)$, and \[\varphi_j(x)=\sqrt{\frac{\Gamma(j+1)}{\Gamma(j+\ga+1)}}x^{\frac{\ga}{2}}\mathrm{e}^{-\frac{x}{2}}L_{j}^{(\ga)}(x),\] with $L_{j}^{(\ga)}(x)$ denoting the Laguerre polynomial of degree $j$.

The kernel $K_{n}(x,y)$ tends to the Airy kernel at the soft edge \cite{Forrester}, i.e.,
$$
\lim_{n\rightarrow\infty}2^{\frac{4}{3}}n^{\frac{1}{3}}K_{n}\left(4n+2\ga+2+2^{\frac{4}{3}}n^{\frac{1}{3}}u,4n+2\ga+2+2^{\frac{4}{3}}n^{\frac{1}{3}}v\right)
=K_{\mathrm{Airy}}(u,v),
$$
where $K_{\mathrm{Airy}}(u,v)$ is the Airy kernel defined by
$$
K_{\mathrm{Airy}}(u,v):=\frac{\mathrm{Ai}(u)\mathrm{Ai}'(v)-\mathrm{Ai}(v)\mathrm{Ai}'(u)}{u-v}.
$$
Here $\mathrm{Ai}(\cdot)$ is the Airy function of the first kind \cite{Lebedev}. See also \cite{TW1,Min2020} on the study of the Airy kernel. Tracy and Widom \cite{TW1} showed that $\widehat{\mathbb{P}}(n,\ga,t)$ can be expressed in terms of a Painlev\'{e} II transcendent at the soft edge.

At the hard edge, $K_{n}(x,y)$ tends to the Bessel kernel \cite{Forrester}, that is,
$$
\lim_{n\rightarrow\infty}\frac{1}{4n}K_{n}\left(\frac{u}{4n},\frac{v}{4n}\right)
=K_{\mathrm{Bessel}}(u,v),
$$
where
\[K_{\mathrm{Bessel}}(u,v):=\frac{J_{\gamma}(\sqrt{u})\sqrt{v}J_{\gamma}'(\sqrt{v})-\sqrt{u}J_{\gamma}'(\sqrt{u})J_{\gamma}(\sqrt{v})}{2(u-v)}.\]
Tracy and Widom \cite{TW2} proved that the log-derivative of $\widehat{\mathbb{P}}(n,\ga,t)$ satisfies a particular Painlev\'{e} III equation when $t$ approaches the hard edge.

The level density of the LUE with the weight $x^{\ga}\mathrm{e}^{-x}$ is given by \cite[p.356 (19.1.11)]{Mehta} \[\rho(x)=\frac{1}{2\pi}\sqrt{\frac{4n-x}{x}},\qquad 0<x<4n,\]
which is an example of the Mar\v{c}enko-Pastur law \cite{MP}. Hence, in \cite{Deift} and also in this paper, the scaled Laguerre weight with $n$ appearing in the exponent is considered, in order to make the equilibrium density of the eigenvalues supported on $(0,1)$ instead of $(0,4n)$.

For finite $n$, Tracy and Widom \cite {TW3} established a particular Painlev\'{e} V equation satisfied by the log-derivative of $\widehat{\mathbb{P}}(n,\ga,t)$. Adler and van Moerbeke \cite{Adler} derived the same results via differential operators.
By using the ladder operator approach of orthogonal polynomials, Basor and Chen \cite{Basor2009} investigated the Hankel determinant generated by the Laguerre weight with a jump, which includes $\widehat{D}_{n}(t)$ as a special case, and a Painlev\'{e} V equation shows up as is expected.
Based on their results, Lyu and Chen \cite{Lyu2017} considered the asymptotic behavior of $x^{\gamma/2}\mathrm{e}^{-x/2}P_j(x)$ at the soft edge, with $P_j(x),\;j=0,1,\ldots$ denoting the monic polynomials orthogonal with respect to $x^{\gamma}\mathrm{e}^{-x}$ on $[0,t]$.
We mention here that the ladder operator method is effective and straightforward in the finite dimensional analysis of problems in unitary ensembles, see for example, the gap probability \cite{Basor2012,Lyu2019,LyuChenFan,Min2018} and the partition function for weights with discontinuities or singularities \cite{Chen2010,ChenZhang,Min2019a,MLC}.

In the present paper, in order to derive the asymptotic formula for $\mathbb{P}(n,\gamma,\alpha)$ at the soft edge, we proceed from two aspects. On one hand, we first derive a large $n$ asymptotic expansion for  $\frac{d}{d\alpha}\ln\mathbb{P}(n,\gamma,\alpha)$, by using differential equations for finite $n$ from \cite{Basor2009}. Then we integrate the expansion from $\alpha_0$ to $\alpha$ with arbitrary $\alpha_0<\alpha$ to obtain an asymptotic formula for $\ln\mathbb{P}(n,\gamma,\alpha)-\ln\mathbb{P}(n,\gamma,\alpha_0)$. On the other hand, we make use of the definition of $D_n(\alpha)$, i.e. the multiple integral, to get an approximate expression for $\ln\mathbb{P}(n,\gamma,\alpha_0)$ when $\al_0$ is close to 0. Taking the sum of these two asymptotic expansions together, and by sending $\alpha_0$ to 0, we come to the asymptotics for $\ln\mathbb{P}(n,\gamma,\alpha)$ in large $n$ with $\alpha$ ranging from 0 to the soft edge, where a term which is independent of $\alpha$ and tends to $0$ as $n\rightarrow\infty$ is included. Finally, by setting $\alpha=1-\frac{s}{(2n)^{2/3}}$ and sending $n$ to $\infty$, we obtain the asymptotic formula of $\ln\mathbb{P}(n,\gamma,\alpha)$ at the soft edge for large $s$,
$$
\lim_{n\rightarrow\infty}\ln\mathbb{P}\left(n,\gamma,1-\frac{s}{(2n)^{2/3}}\right)=-\frac{s^3}{12}-\frac{1}{8}\ln s+\frac{1}{24}\ln 2+\zeta'(-1)+O(s^{-3}),
$$
where the celebrated Tracy-Widom constant \cite{TW1} shows up.

The above method is motivated by Deift, Its and Krasovsky \cite{Deift} where they studied the special case $\mathbb{P}(n,0,\alpha)$, namely the largest eigenvalue distribution on $[0,\alpha]$ of LUE with the weight $\mathrm{e}^{-4nx}$. They used the Riemann-Hilbert approach to get the asymptotic expansion for $\frac{d}{d\alpha}\ln\mathbb{P}(n,0,\alpha)$, while in this paper, as is mentioned above, we relate the weight $x^{\gamma}\mathrm{e}^{-4nx}$ to $x^{\ga}\mathrm{e}^{-x}$ and make use of the established results for the latter weight from \cite{Basor2009}. The Riemann-Hilbert method \cite{Deift1999} is a very powerful tool to investigate the asymptotic behavior of many unitary ensembles. See, for instance, the gap probability problem \cite{DaiXuZhang2018,XuZhao2019}, correlation kernel \cite{ChenChenFan,XuDaiZhao}, partition functions \cite{DaiXuZhang2019, BMM, ACM}, Hankel determinants and orthogonal polynomials \cite{Bogatskiy,Charlier,ChenChenFan2019,Xu2011}.

This paper is organized as follows. In Sec. 2, we present some important results from \cite{Basor2009} which are related to the largest eigenvalue distribution of LUE with the weight $x^{\ga}\mathrm{e}^{-x}$. Sec. 3 is devoted to the derivation of the asymptotic formula for $\frac{d}{d\alpha}\ln\mathbb{P}(n,\gamma,\alpha)$ when $n$ is large. Our main results are developed in Sec. 4.

\section{Preliminaries}
In this section, we show some important results of Basor and Chen \cite{Basor2009}, which are crucial for the analysis of the asymptotic behavior of the largest eigenvalue distribution in the LUE with the scaled Laguerre weight.

It is well known that the multiple integral $\widehat{D}_n(t)$ defined by \eqref{dnt} can be written as the determinant of a Hankel matrix and also as the product of the square of the $L^2$-norms of the corresponding monic orthogonal polynomials \cite[p.16-19]{Ismail}, namely
\begin{align*}
\widehat{D}_n(t):=&\frac{1}{n!}\int_{[0,t]^{n}}\prod_{1\leq i<j\leq n}(x_i-x_j)^2\prod_{k=1}^{n}x_k^{\ga}\mathrm{e}^{-x_k}dx_{k}\\
=&\det\left(\int_{0}^{t}x^{i+j}x^{\ga}\mathrm{e}^{-x}dx\right)_{i,j=0}^{n-1}\\
=&\prod_{j=0}^{n-1}h_j(t),
\end{align*}
where
\be\label{or}
h_j(t)\delta_{jk}:=\int_{0}^{t}P_j(x,t)P_k(x,t)x^{\ga}\mathrm{e}^{-x}dx,
\ee
and $\delta_{jk}$ is the Kronecker delta function. Here $P_n(x,t),\; n=0,1,2,\ldots,$ are monic polynomials of degree $n$ defined by
\be\label{expan}
P_{n}(x,t)=x^{n}+\mathrm{p}(n,t)x^{n-1}+\cdots+P_{n}(0,t).
\ee
Note that, in the following discussions, $n$ stands for any nonnegative integer instead of the dimension of the Hermitian matrices.

The orthogonality (\ref{or}) implies the following three-term recurrence relation \cite{Chihara,Szego}:
\be\label{rr}
xP_{n}(x,t)=P_{n+1}(x,t)+\alpha_{n}(t)P_{n}(x,t)+\beta_{n}(t)P_{n-1}(x,t),
\ee
subject to the initial conditions
$$
P_{0}(x,t):=1,\qquad\qquad\beta_{0}(t)P_{-1}(x,t):=0.
$$

As an easy consequence of (\ref{or})--(\ref{rr}), we have
$$
\alpha_{n}(t)=\mathrm{p}(n,t)-\mathrm{p}(n+1,t),
$$
$$
\beta_{n}(t)=\frac{h_{n}(t)}{h_{n-1}(t)}.
$$
In addition, $\alpha_{n}(t)$ and $\beta_{n}(t)$ admit the following integral representations,
$$
\alpha_{n}(t)=\frac{1}{h_{n}(t)}\int_{0}^{t}x P_{n}^{2}(x,t)x^{\ga}\mathrm{e}^{-x}dx,
$$
$$
\beta_{n}(t)=\frac{1}{h_{n-1}(t)}\int_{0}^{t}x P_{n}(x,t)P_{n-1}(x,t)x^{\ga}\mathrm{e}^{-x}dx.
$$
From the recurrence relation (\ref{rr}), one can derive the famous Christoffel-Darboux formula \cite{Szego},
$$
\sum_{k=0}^{n-1}\frac{P_{k}(x,t)P_{k}(y,t)}{h_{k}(t)}=\frac{P_{n}(x,t)P_{n-1}(y,t)-P_{n}(y,t)P_{n-1}(x,t)}{h_{n-1}(t)(x-y)}.
$$

For convenience, we will not display the $t$ dependence of relevant quantities unless it is required in the following discussions.

Basor and Chen \cite{Basor2009} studied the Hankel determinant generated by the discontinuous Laguerre weight $x^{\gamma}\mathrm{e}^{-x}(A+B\theta(x-t))$, where $\theta(\cdot)$ is the Heaviside step function. We observe that the special case where $A=1$ and $B=-1$ corresponds to $\widehat{D}_n(t)$.

It is proved in \cite{Basor2009} that the monic orthogonal polynomials defined by \eqref{or} satisfy the lowering operator equation
$$
\left(\frac{d}{dz}+B_{n}(z)\right)P_{n}(z)=\beta_{n}A_{n}(z)P_{n-1}(z),
$$
and the raising operator equation
$$
\left(\frac{d}{dz}-B_{n}(z)-\mathrm{v}'(z)\right)P_{n-1}(z)=-A_{n-1}(z)P_{n}(z),
$$
where $\mathrm{v}(z):=-\ln(z^{\gamma}\mathrm{e}^{-z})=z-\ga\ln z$, and
$$
A_{n}(z):=\frac{R_n(t)}{z-t}+\frac{1}{h_{n}(t)}\int_{0}^{t}\frac{\mathrm{v}'(z)-\mathrm{v}'(y)}{z-y}P_{n}^{2}(y)y^{\ga}\mathrm{e}^{-y}dy,
$$
$$
B_{n}(z):=\frac{r_n(t)}{z-t}+\frac{1}{h_{n-1}(t)}\int_{0}^{t}\frac{\mathrm{v}'(z)-\mathrm{v}'(y)}{z-y}P_{n}(y)P_{n-1}(y)y^{\ga}\mathrm{e}^{-y}dy.
$$
Here the auxiliary quantities $R_n(t)$ and $r_n(t)$ are defined by
$$
R_n(t):=-\frac{t^{\ga}\mathrm{e}^{-t}}{h_n(t)}P_n^2(t,t),
$$
$$
r_n(t):=-\frac{t^{\ga}\mathrm{e}^{-t}}{h_{n-1}(t)}P_n(t,t)P_{n-1}(t,t),
$$
and $P_n(t,t):=P_n(z,t)|_{z=t}$.

From the definitions of $A_n(z)$ and $B_n(z)$, Basor and Chen \cite{Basor2009} derived two identities valid for $z\in\mathbb{C}\cup\{\infty\}$:
\be
B_{n+1}(z)+B_{n}(z)=(z-\alpha_{n})A_{n}(z)-\mathrm{v}'(z), \tag{$S_{1}$}
\ee
\be
1+(z-\alpha_{n})(B_{n+1}(z)-B_{n}(z))=\beta_{n+1}A_{n+1}(z)-\beta_{n}A_{n-1}(z). \tag{$S_{2}$}
\ee
A combination of $(S_1)$ and $(S_2)$ produces
\be
B_{n}^{2}(z)+\mathrm{v}'(z)B_{n}(z)+\sum_{j=0}^{n-1}A_{j}(z)=\beta_{n}A_{n}(z)A_{n-1}(z). \tag{$S_{2}'$}
\ee

Computing $A_n(z)$ and $B_n(z)$ by using their definitions and substituting the resulting expressions into the compatibility conditions $(S_1)$, $(S_2)$ and $(S_2')$, Basor and Chen \cite{Basor2009} obtained the following results.
\begin{proposition} $A_n(z)$ and $B_n(z)$ are given by
$$
A_{n}(z)=\frac{R_{n}(t)}{z-t}+\frac{1-R_{n}(t)}{z},
$$
$$
B_{n}(z)=\frac{r_{n}(t)}{z-t}-\frac{n+r_{n}(t)}{z}.
$$
\end{proposition}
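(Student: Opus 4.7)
The plan is to compute the two integrals in the definitions of $A_n(z)$ and $B_n(z)$ explicitly by exploiting the particular form of $\mathrm{v}$ and applying integration by parts together with the orthogonality of $\{P_n\}$.

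First, since $\mathrm{v}(z)=z-\gamma\ln z$, one has $\mathrm{v}'(z)=1-\gamma/z$ and therefore
$$\frac{\mathrm{v}'(z)-\mathrm{v}'(y)}{z-y}=\frac{\gamma}{yz}.$$
Substituting this into the definitions, the defining formulas reduce to
$$A_n(z)=\frac{R_n(t)}{z-t}+\frac{1}{z}\cdot\frac{\gamma}{h_n(t)}\int_0^t P_n^2(y)\,y^{\gamma-1}e^{-y}\,dy,$$
$$B_n(z)=\frac{r_n(t)}{z-t}+\frac{1}{z}\cdot\frac{\gamma}{h_{n-1}(t)}\int_0^t P_n(y)P_{n-1}(y)\,y^{\gamma-1}e^{-y}\,dy,$$
so it suffices to evaluate the two remaining integrals.

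For the $A_n$ integral, I would use the elementary identity $\gamma y^{\gamma-1}e^{-y}=(y^\gamma e^{-y})'+y^\gamma e^{-y}$, which follows by differentiating $y^\gamma e^{-y}$. The second piece yields $h_n(t)$ at once; integrating the first piece by parts, the boundary at $y=t$ contributes $P_n^2(t)t^\gamma e^{-t}$, the boundary at $y=0$ vanishes for $\gamma>0$, and the interior term $-2\int_0^t P_n(y)P_n'(y)\,y^\gamma e^{-y}\,dy$ is zero by orthogonality since $\deg P_n'=n-1<n$. Assembling the pieces and applying the definition of $R_n(t)$ gives
$$\frac{\gamma}{h_n(t)}\int_0^t P_n^2(y)\,y^{\gamma-1}e^{-y}\,dy=\frac{P_n^2(t)\,t^\gamma e^{-t}}{h_n(t)}+1=1-R_n(t),$$
which is exactly the coefficient of $1/z$ claimed for $A_n(z)$.

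The $B_n$ integral is handled identically. Integration by parts in the same manner, together with the vanishing of $\int_0^t P_nP_{n-1}\,y^\gamma e^{-y}\,dy$ by orthogonality, reduces the problem to evaluating $\int_0^t (P_n'P_{n-1}+P_nP_{n-1}')\,y^\gamma e^{-y}\,dy$. The second summand vanishes because $\deg P_{n-1}'=n-2<n$, while writing $P_n'=nP_{n-1}+(\text{terms of degree}<n-1)$ and invoking orthogonality once more gives $\int_0^t P_n'P_{n-1}\,y^\gamma e^{-y}\,dy=n\,h_{n-1}(t)$. Hence
$$\frac{\gamma}{h_{n-1}(t)}\int_0^t P_n(y)P_{n-1}(y)\,y^{\gamma-1}e^{-y}\,dy=\frac{P_n(t)P_{n-1}(t)\,t^\gamma e^{-t}}{h_{n-1}(t)}-n=-(n+r_n(t)),$$
by the definition of $r_n(t)$, which completes the verification. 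The only mild subtlety is the boundary contribution at $y=0$, which is immediate for $\gamma>0$ and can be handled by analytic continuation in $\gamma$ otherwise; I do not anticipate any deeper obstacle, as the proposition is essentially a residue/orthogonality calculation once $\mathrm{v}'$ is made explicit.
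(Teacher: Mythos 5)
Your computation is correct and is essentially the same derivation as in Basor and Chen's paper, which this paper simply cites for the proposition: one reduces $\frac{\mathrm{v}'(z)-\mathrm{v}'(y)}{z-y}$ to $\frac{\gamma}{yz}$, writes $\gamma y^{\gamma-1}\mathrm{e}^{-y}=(y^{\gamma}\mathrm{e}^{-y})'+y^{\gamma}\mathrm{e}^{-y}$, and evaluates the resulting integrals by parts and by orthogonality, with the boundary term at $y=t$ producing $R_n$ and $r_n$. Your closing remark is the right one to make: for $-1<\gamma\le 0$ the stated integral representations themselves require reinterpretation (the $y=0$ endpoint contributes or the integral diverges), and the final formulas are then obtained by continuation in $\gamma$, exactly as you indicate.
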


%
\begin{proposition}\label{p1}
The quantity
$$S_{n}(t):=1-\frac{1}{R_n(t)},$$
satisfies the second-order differential equation
\be\label{pv}
S_{n}''=\frac{(3S_{n}-1)(S_{n}')^2}{2S_{n}(S_{n}-1)}-\frac{S_{n}'}{t}-\frac{\ga^{2}}{2}\frac{(S_{n}-1)^2}{t^2S_{n}}+\frac{(2n+1+\ga)S_{n}}{t}-\frac{S_{n}(S_{n}+1)}{2(S_{n}-1)},
\ee
which is a particular Painlev\'{e} V equation, $P_{V}\left(0,-\frac{\ga^2}{2},2n+1+\ga,-\frac{1}{2}\right)$, following the convention of \cite{Gromak}.
\end{proposition}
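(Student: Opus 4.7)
My plan is to derive (2.5) by combining the explicit expressions for $A_n(z)$ and $B_n(z)$ from Proposition 2.2 with the compatibility conditions $(S_1)$, $(S_2)$, $(S_2')$, following the ladder-operator route. The first step is to substitute the rational forms of $A_n$ and $B_n$ into each of the three compatibility identities and equate residues at the simple poles $z=0$ and $z=t$, along with the expansion at $z=\infty$. Since $\mathrm v'(z)=1-\gamma/z$, $(S_1)$ yields two scalar algebraic identities that express $\alpha_n$ and $r_n+r_{n+1}$ as polynomial combinations of $tR_n$, $\gamma$ and $n$; $(S_2)$ gives analogous identities involving $\beta_{n\pm 1}$ and $R_{n\pm 1}$; and $(S_2')$, whose left-hand side contains the telescoping sum $\sum_{j=0}^{n-1}A_j(z)$, produces a quadratic constraint relating $r_n^2$ and $\beta_n R_n R_{n-1}$.

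The second step is to bring in the $t$-dependence. Differentiating $h_n(t)=\int_0^t P_n^2(x,t)\, x^\gamma e^{-x}\,dx$ and using the orthogonality of $\partial_t P_n$ against $P_n$ gives the Toda-type identity $\tfrac{d}{dt}\ln h_n(t)=-R_n(t)$, and hence $\tfrac{d}{dt}\ln\beta_n=R_{n-1}-R_n$. A parallel manipulation, together with differentiating $R_n$ directly via its definition $R_n=-t^\gamma e^{-t}P_n^2(t,t)/h_n(t)$ and using the ladder operators at $z=t$, yields a first-order relation of the shape $tR_n'(t)=\mathcal F(R_n,r_n)$, expressing the $t$-derivative of $R_n$ as a specific rational function of $R_n$ and $r_n$. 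This is the step that converts the algebra of step one into a genuine differential system.

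The final step is elimination: use the algebraic identities to solve for $r_n$ as a rational function of $R_n$ and $tR_n'$, differentiate the first-order relation once more, and collapse the coupled system to a single second-order ODE for $R_n(t)$. The substitution $S_n=1-1/R_n$ (so that $R_n=1/(1-S_n)$ and $R_n'=S_n'/(1-S_n)^2$) then converts this ODE into the form (2.5), and a direct comparison with the standard form in Gromak's monograph reads off the parameters $(0,-\gamma^2/2,2n+1+\gamma,-1/2)$. The principal obstacle is the elimination bookkeeping: the algebraic identities produced at the three distinguished points $z=0,t,\infty$ from the three compatibility conditions are not all independent, so one has to identify a minimal set to avoid circular substitutions. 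Matching the resulting ODE to the canonical Painlev\'e V parametrization is mechanical but sign-sensitive, since a stray factor of two immediately perturbs the parameter tuple.
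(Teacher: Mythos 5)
The paper does not prove this proposition itself: it is imported verbatim from Basor and Chen \cite{Basor2009}, and your outline is an accurate reconstruction of precisely the route taken there --- residue analysis of $(S_1)$, $(S_2)$, $(S_2')$ at $z=0$, $z=t$ and $z=\infty$ using the rational forms of $A_n$, $B_n$ and $\mathrm{v}'(z)=1-\gamma/z$, the Toda-type relation $\frac{d}{dt}\ln h_n(t)=-R_n(t)$, a coupled first-order (Riccati) system for $R_n$ and $r_n$, and elimination of $r_n$ followed by the substitution $S_n=1-1/R_n$ to reach the canonical $P_V$ form. Your proposal therefore takes essentially the same approach as the cited source on which the paper relies.
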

Let
$$
\s_n(t):=t\frac{d}{dt}\ln \widehat{D}_n(t).
$$
Then, in view of \eqref{pr}, we have
\begin{align}\label{sn}
\s_n(t)=t\frac{d}{dt}\ln\widehat{\mathbb{P}}(n,\ga,t).
\end{align}
Recall that $\widehat{\mathbb{P}}(n,\ga,t)$ represents the largest eigenvalue distribution function on $[0,t]$ of LUE with the weight $x^{\ga}\mathrm{e}^{-x}$. The following results come from \cite{Basor2009} and \cite{Lyu2017}.

\begin{proposition}\label{p2}
The quantity $\s_n(t)$ satisfies the Jimbo-Miwa-Okamoto $\s$-form of Painlev\'{e} V \cite{Jimbo},
$$
(t\s_n'')^2=4(\s_n')^2\left(\s_n-n(n+\ga)-t\s_n'\right)+\left((2n+\ga-t)\s_n'+\s_n\right)^2.
$$
In addition, $\sigma_n(t)$ is expressed in terms of $S_n(t)$ by
\bea\label{ss}
\s_n(t)
&=&-\frac{\ga^2}{4S_n}+\frac{t(4n+2\ga-t)}{4(S_n-1)}-\frac{t^2}{4(S_n-1)^2}+\frac{t^2(S_n')^2}{4S_n(S_n-1)^2}.
\eea
\end{proposition}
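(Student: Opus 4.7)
The plan is to use the ladder operator machinery of Section 2, in particular the compatibility conditions $(S_1)$, $(S_2)$, $(S_2')$ together with the explicit forms of $A_n(z)$ and $B_n(z)$ from Proposition 2.1, to first express $\sigma_n(t)$ algebraically in terms of the auxiliary quantities $R_n(t)$ and $r_n(t)$, then convert this to the formula \eqref{ss} in $S_n$ and $S_n'$, and finally derive the Jimbo--Miwa--Okamoto $\sigma$-form by invoking the Painlev\'e V equation \eqref{pv} for $S_n$.

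First I would differentiate $h_j(t) = \int_0^t P_j^2(x,t)x^\gamma e^{-x}\,dx$ with respect to $t$. Since $\partial_t P_j(x,t)$ has degree $\leq j-1$ in $x$, the variational term vanishes by orthogonality and only the boundary contribution survives, giving $h_j'(t)/h_j(t) = -R_j(t)$. Summing over $j = 0,1,\dots,n-1$ yields
\[
\sigma_n(t) = -t\sum_{j=0}^{n-1} R_j(t).
\]
Next I would substitute the explicit forms of $A_n(z), B_n(z)$ from Proposition~2.1 into $(S_2')$; the result is a rational identity in $z$ with poles only at $z=0$ and $z=t$. Matching the $(z-t)^{-1}$ coefficient isolates $\sum_{j=0}^{n-1}R_j$, while the $(z-t)^{-2}$ and $z^{-2}$ coefficients yield $r_n^2 = \beta_n R_n R_{n-1}$ and $\beta_n(1-R_n)(1-R_{n-1}) = (n+r_n)(n+r_n+\gamma)$. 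Reading $(S_1)$ at $z=0, t$ similarly gives $\alpha_n = 2n+1+\gamma + tR_n$ and $r_n + r_{n+1} = (t-\alpha_n)R_n$. Eliminating $R_{n-1}$ and $\beta_n$ from these relations presents $\sigma_n$ as a rational expression in $R_n$ and $r_n$ alone.

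To obtain \eqref{ss} I would exchange $r_n$ for a derivative. The required identity is a standard consequence of the ladder operators: differentiating the definition $R_n(t) = -t^\gamma e^{-t} P_n^2(t,t)/h_n(t)$ in $t$ and then using the lowering/raising operator relations (matching the regular parts at $z=t$) to rewrite $\partial_x P_n(x,t)|_{x=t}$ and $\partial_t P_n(t,t)$ in terms of $R_n, r_n$ produces a relation of the shape $tR_n' = \Phi(R_n, r_n)$ for an explicit rational $\Phi$. Solving for $r_n$ and then setting $R_n = 1/(1-S_n)$ delivers the formula \eqref{ss}.

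Finally, for the $\sigma$-form, the strategy is elimination. Differentiating \eqref{ss} twice expresses $\sigma_n$, $\sigma_n'$ and $\sigma_n''$ rationally in $S_n$, $S_n'$, $S_n''$; substituting $S_n''$ from \eqref{pv} reduces these to rational expressions in $(S_n, S_n')$ only. Thus $(\sigma_n, \sigma_n', \sigma_n'')$ traces a two-parameter surface in $\mathbb{C}^3$, whose defining polynomial equation is the second-order ODE sought, and a resultant argument identifies it with the claimed quadratic identity. I expect this concluding elimination to be the main obstacle: the intermediate rational expressions are heavy, and the algebra has to be organized carefully (or checked in a computer algebra system) to cleanly produce the polynomial identity $(t\sigma_n'')^2 = 4(\sigma_n')^2(\sigma_n - n(n+\gamma) - t\sigma_n') + ((2n+\gamma-t)\sigma_n' + \sigma_n)^2$ without spurious factors.
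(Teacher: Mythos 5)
Your outline is correct and is essentially the argument of the sources the paper relies on: the paper itself gives no proof of Proposition \ref{p2}, importing it from \cite{Basor2009} and \cite{Lyu2017}, and those references proceed exactly as you describe --- $h_j'/h_j=-R_j$ giving $\sigma_n=-t\sum_{j<n}R_j$, residue matching in $(S_1)$, $(S_2')$ to get the algebraic relations among $\alpha_n,\beta_n,R_n,r_n$ (all of which you state correctly), a Riccati-type relation $tR_n'=\Phi(R_n,r_n)$ to trade $r_n$ for $S_n'$, and elimination to reach the $\sigma$-form. The only soft spots are the last two steps, which you assert in shape rather than carry out, but that is where the (purely computational) work of \cite{Basor2009} lives.
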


In the next section, we will make use of the above results to study $\mathbb{P}(n,\gamma,\alpha)$, that is, the probability that the largest eigenvalue is $\leq\alpha$ of the LUE with the weight $x^{\gamma}\mathrm{e}^{-4nx}$.

\section{Logarithmic Derivative of the Largest Eigenvalue Distribution Function}
We consider the LUE defined by the scaled Laguerre weight,
$$
w(x;\gamma,n)=x^{\gamma}\mathrm{e}^{-4nx},\qquad x\in[0,\infty),\quad\gamma>-1.
$$
As is shown in the introduction, the probability that the largest eigenvalue in this LUE is $\leq\alpha$ is equal to the probability that the largest eigenvalue is $\leq 4n\alpha$ in the LUE with the weight $x^{\ga}\mathrm{e}^{-x}$, i.e.,
%
\be\label{re}
\mathbb{P}(n,\gamma,\alpha)=\widehat{\mathbb{P}}(n,\gamma,4n\alpha).
\ee

According to the results in the previous section with $t=4n\alpha$, we come to the following result.

\begin{lemma} As $n\rightarrow\infty$, $\frac{d}{d\al}\ln\mathbb{P}(n,\gamma,\alpha)$ has the following asymptotic expansion,
\bea\label{phi}
\frac{d}{d\al}\ln\mathbb{P}(n,\gamma,\alpha)&=&\frac{(1-\al)^2}{\al}n^2+\frac{\ga(1-\al)}{\al}n+\frac{\al+2\ga^2(1-\al)}{4(1-\al^2)}
-\frac{\ga\left(\al+\ga^2(1-\al)^2\right)}{4n(1-\al^2)^2}\nonumber\\[5pt]
&&+O\left(\frac{1}{(1-\al)^4n^2}\right).
\eea
\end{lemma}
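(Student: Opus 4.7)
The plan is to reduce the problem to asymptotic analysis of the Jimbo-Miwa-Okamoto $\sigma$-form of Painlev\'e V from Proposition \ref{p2}. First, applying the chain rule to the identity $\mathbb{P}(n,\gamma,\alpha)=\widehat{\mathbb{P}}(n,\gamma,4n\alpha)$ from \eqref{re} and using the definition $\sigma_n(t)=t\frac{d}{dt}\ln\widehat{\mathbb{P}}(n,\gamma,t)$ from \eqref{sn}, one obtains
\begin{equation*}
\frac{d}{d\alpha}\ln\mathbb{P}(n,\gamma,\alpha)=\frac{\sigma_n(4n\alpha)}{\alpha},
\end{equation*}
so the task becomes producing the large-$n$ expansion of $\sigma_n(4n\alpha)$ through order $n^{-1}$.

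Next, I would substitute $t=4n\alpha$ and set $s_n(\alpha):=\sigma_n(4n\alpha)$ in the $\sigma$-form equation; after multiplying through by $16n^2$ it becomes
\begin{equation*}
\alpha^2 (s_n'')^2 = 4(s_n')^2\bigl(s_n-n(n+\gamma)-\alpha s_n'\bigr)+\bigl((2n+\gamma-4n\alpha)s_n'+4n s_n\bigr)^2,
\end{equation*}
with primes denoting $d/d\alpha$. I would then posit the ansatz
\begin{equation*}
s_n(\alpha)=A(\alpha)n^2+B(\alpha)n+C(\alpha)+\frac{D(\alpha)}{n}+O\!\left(\frac{1}{(1-\alpha)^4 n^2}\right)
\end{equation*}
and match coefficients of $n^6,\,n^5,\,n^4,\,n^3$ in succession. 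Crucially, the $(s_n'')^2$ term on the left is suppressed by a factor $n^{-2}$ compared with the $(s_n')^2 s_n$ term on the right, so at each of these four orders the resulting equation is purely \emph{algebraic}. The $n^6$ equation is a quadratic in $A(\alpha)$; the correct branch is singled out by the physical condition $s_n(0)=0$ together with the requirement that the leading term match the Mar\v{c}enko-Pastur prediction for the equilibrium density (supported on $(0,1)$ in the rescaled variable $\alpha=t/(4n)$), yielding $A(\alpha)=(1-\alpha)^2$. Once $A$ is fixed, the $n^5$, $n^4$, and $n^3$ equations each become \emph{linear} in the next unknown, so $B,C,D$ are determined uniquely, and a direct computation reproduces the coefficients stated in the lemma.

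The main obstacle will be controlling the remainder uniformly for $\alpha\in(0,1)$. Each step in the cascade involves dividing by a ``discriminant'' factor proportional to $\alpha(1-\alpha^2)$, so successive corrections pick up additional powers of $(1-\alpha)^{-1}$; together with the intrinsic $n^{-1}$ gain per order this accounts for the precise $(1-\alpha)^{-4}n^{-2}$ size of the error term. To make the expansion rigorous rather than formal, I would write $s_n(\alpha)=An^2+Bn+C+D/n+R_n(\alpha)$, substitute into the $\sigma$-form equation, and cancel the terms already matched; the residual equation for $R_n$ has the schematic form $L_n R_n=\mathcal{E}_n(\alpha)$, where $L_n$ is a linear operator whose principal part grows like $n^2(1-\alpha^2)$ and $\mathcal{E}_n$ satisfies $\mathcal{E}_n=O((1-\alpha)^{-2})$. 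Inverting $L_n$ then delivers the stated error bound, and dividing the resulting expansion of $s_n$ by $\alpha$ produces \eqref{phi}.
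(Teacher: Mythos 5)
Your opening reduction, $\frac{d}{d\al}\ln\mathbb{P}(n,\ga,\al)=\sigma_n(4n\al)/\al$, is exactly the paper's, but from there the routes diverge: the paper never expands the $\sigma$-form. It instead works with the second-order Painlev\'e V equation \eqref{pv} for $S_n$, gets the leading behavior by dropping the derivative terms (which leaves a genuinely \emph{algebraic} cubic with a unique real root, so no branch or integration-constant issue), builds the series \eqref{fne} for $F_n(\al)=S_n(4n\al)$ by coefficient matching, and only then converts to $\sigma_n$ via the closed-form identity \eqref{ss}. Your route through the $\sigma$-form is workable in principle and at the same (formal) level of rigor as the paper's, but two steps as written would fail. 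First, the order-by-order equations are not ``purely algebraic'': because the $\sigma$-form couples $\sigma_n'$ to $\sigma_n$, the $n^6$ equation reads
\begin{equation*}
4(A')^2\bigl(A-1-\al A'\bigr)+\bigl((2-4\al)A'+4A\bigr)^2=0,
\end{equation*}
a nonlinear first-order ODE in $A$, and each subsequent order is a linear first-order ODE for the next coefficient; so $B$, $C$, $D$ are \emph{not} determined uniquely by algebra --- each requires an initial condition. Second, the condition you invoke, $s_n(0)=0$, is wrong: since $\widehat{D}_n(t)\sim c\,t^{n(n+\ga)}$ as $t\to 0^{+}$, one has $\sigma_n(0^{+})=n(n+\ga)$, hence $A(0)=1$, $B(0)=\ga$, $C(0)=D(0)=0$. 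With $A(0)=0$ the leading equation admits the spurious solution $A\equiv 0$, so your stated boundary condition would select the wrong branch; the Mar\v{c}enko--Pastur heuristic you mention would rescue the leading term but not fix the constants for $B$, $C$, $D$. With the corrected boundary data the cascade does reproduce the coefficients of the lemma. Finally, note that the paper's own proof is a formal matched expansion and does not rigorously bound the remainder either; your $L_nR_n=\mathcal{E}_n$ sketch promises more than the paper actually establishes, and making it precise (in particular inverting $L_n$ uniformly up to $\al=1-s_0(2n)^{-2/3}$) is a nontrivial task you have only gestured at.
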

\begin{proof}
Let
$$
t=4n\alpha,
$$
and denote
\be\label{t1}
F_n(\al):=S_n(t)=S_n(4n\al).
\ee
Then equation (\ref{pv}) becomes
\be\label{fn}
F_n''=\frac{(3F_n-1)(F_n')^2}{2F_n(F_n-1)}-\frac{F_n'}{\al}+\frac{4n(2n+1+\ga)F_n}{\al}-\frac{\ga^2(F_n-1)^2}{2\al^2F_n}-\frac{8n^2F_n(F_n+1)}{F_n-1}.
\ee
In order to obtain the asymptotic formula of $F_n(\al)$, we disregard the derivative terms in this equation and have
$$
\frac{4n(2n+1+\ga)\tilde{F}_n}{\al}-\frac{\ga^2(\tilde{F}_n-1)^2}{2\al^2\tilde{F}_n}-\frac{8n^2\tilde{F}_n(\tilde{F}_n+1)}{\tilde{F}_n-1}=0,
$$
which is actually a cubic equation for $\tilde{F}_n(\al)$, i.e.,
$$
\left(16n^2\al(1-\al)+8n\al(1+\ga)-\ga^2\right)\tilde{F}_n^3-\left(16n^2\al(1+\al)+8n\al(1+\ga)-3\ga^2\right)\tilde{F}_n^2-3\ga^2\tilde{F}_n+\ga^2=0.
$$
It has only one real solution which has the following large $n$ expansion,
$$
\tilde{F}_n(\al)=\frac{1+\al}{1-\al}-\frac{\al(1+\ga)}{n(1-\al)^2}+\frac{\al(1+\al)^2(1+2\ga)+\al(1+3\al)\ga^2}{2n^2(1-\al)^3(1+\al)^2}+O\left(\frac{1}{n^3(1-\al)^4}\right).
$$
Hence we suppose that $F_n(\al)$ has the following series expansion,
$$
F_n(\al)=\sum_{i=0}^{\infty}a_i(\al)n^{-i},\qquad n\rightarrow\infty.
$$
Substituting the above expression into equation (\ref{fn}), and comparing the coefficients of identical powers of $n$ on both sides, we obtain $a_i(\al), i=0, 1, 2, \ldots$ one by one. This leads to the expansion
\bea\label{fne}
F_n(\al)&=&\frac{1+\al}{1-\al}-\frac{\al(1+\ga)}{n(1-\al)^2}+\frac{\al+\al^2-\al^4+2\al(1-\al)(1+\al)^2\ga+\al(1-\al)(1+3\al)\ga^2}{2n^2(1-\al)^4(1+\al)^2}\nonumber\\[10pt]
&&+O\left(\frac{1}{n^3(1-\al)^5}\right).
\eea

Furthermore, with the relation $t=4n\alpha$, it follows from (\ref{sn}) and (\ref{re}) that
\begin{align}
\sigma_{n}(t)=&\sigma_{n}(4n\al)=\al\frac{d}{d\al}\ln\widehat{\mathbb{P}}(n,\ga,4n\al)\nonumber\\
=&\al\frac{d}{d\al}\ln\mathbb{P}(n,\gamma,\alpha).\nonumber
\end{align}
Hence, by using (\ref{ss}) and in view of (\ref{t1}), we are able to express $\frac{d}{d\al}\ln\mathbb{P}(n,\gamma,\alpha)$ in terms of $F_n(\al)$,
\be\label{pf}
\frac{d}{d\al}\ln\mathbb{P}(n,\gamma,\alpha)=-\frac{\ga^2}{4\al F_n}+\frac{2n(2n(1-\al)+\ga)}{F_n-1}-\frac{4n^2\al}{(F_n-1)^2}+\frac{\al(F_n')^2}{4F_n(F_n-1)^2}.
\ee
Substituting (\ref{fne}) into (\ref{pf}), we arrive at \eqref{phi}.
\end{proof}
\begin{remark}
 When $\ga=0$, our formula (\ref{phi}) coincides with the expression (152) of Deift et al. \cite{Deift}.
\end{remark}

In order to obtain the asymptotic formula of the largest eigenvalue distribution function $\mathbb{P}(n,\gamma,\alpha)$ as $n\rightarrow\infty$, we can integrate identity (\ref{phi}) from $\al_0$ to any $\al$, where $\alpha_0$ is close to 0 and $0<\al_0<\al\leq\frac{1}{4n}(4n-2^{4/3}n^{1/3}s_0)=1-\frac{s_0}{(2n)^{2/3}}$ with finite $s_0>0$. See \cite{Perret, Lyu2017} for the study on the soft edge scaling limit of LUE. So we need to know the asymptotics of $\mathbb{P}(n,\gamma,\alpha)$ when $\al$ tends to 0. We will analyze it in the next section following the method in \cite{Deift}.

\section{Asymptotic Behavior of the Largest Eigenvalue Distribution Function}
Returning to our problem, we recall that the probability that all the eigenvalues lie in $[0,\alpha]$ is given by
$$
\mathbb{P}(n,\gamma,\alpha)=\frac{D_{n}(\alpha)}{D_{n}(\infty)},
$$
where
$$
D_{n}(\alpha)=\frac{1}{n!}\int_{[0,\alpha]^{n}}\prod_{1\leq i<j\leq n}(x_i-x_j)^2\prod_{k=1}^{n}x_{k}^{\ga}\mathrm{e}^{-4nx_{k}}dx_{k},
$$
and
$$
D_{n}(\infty)=\frac{1}{n!}\int_{[0,\infty)^{n}}\prod_{1\leq i<j\leq n}(x_i-x_j)^2\prod_{k=1}^{n}x_{k}^{\ga}\mathrm{e}^{-4nx_{k}}dx_{k}.
$$
By changing variables $x_{\ell}=\al t_{\ell}, {\ell}=1,2,\ldots,n$, we find
$$
D_{n}(\alpha)=\al^{n(n+\ga)}\frac{1}{n!}\int_{[0,1]^{n}}\prod_{1\leq i<j\leq n}(t_i-t_j)^2\prod_{k=1}^{n}t_{k}^{\ga}\mathrm{e}^{-4n\al t_{k}}dt_{k}.
$$
For fixed $n$ and as $\al\rightarrow 0$, we have
$$
\mathrm{e}^{-4n\al t_{k}}=1-4n\al t_{k}+O(\al^2),
$$
so that
\bea
D_{n}(\alpha)&=&\al^{n(n+\ga)}\frac{1}{n!}\int_{[0,1]^{n}}\prod_{1\leq i<j\leq n}(t_i-t_j)^2\prod_{k=1}^{n}t_{k}^{\ga}(1-4n\al t_{k}+O(\al^2))dt_{k}\nonumber\\
&=&\al^{n(n+\ga)}A_n(\gamma)(1+o_n(\al)),\nonumber
\eea
where $o_n(\al)\rightarrow 0$ as $\al\rightarrow 0$ for fixed $n$, and
$$
A_n(\gamma):=\frac{1}{n!}\int_{[0,1]^{n}}\prod_{1\leq i<j\leq n}(t_i-t_j)^2\prod_{k=1}^{n}t_{k}^{\ga}dt_{k}.
$$
Hence, we find as $\al\rightarrow 0$,
\bea\label{pn}
\ln\mathbb{P}(n,\gamma,\alpha)&=&\ln D_{n}(\alpha)-\ln D_{n}(\infty)\nonumber\\
&=&n(n+\ga)\ln\al+\ln A_n(\gamma)-\ln D_{n}(\infty)+o_n(\al).
\eea

According to identity (17.1.3) in \cite{Mehta}, we have
\bea
A_n(\gamma)
&=&\frac{1}{n!}\prod_{j=0}^{n-1}\frac{\Gamma(j+1)\Gamma(j+2)\Gamma(j+\ga+1)}{\Gamma(j+n+\ga+1)}\nonumber\\
&=&\frac{G^{2}(n+1)G^2(n+\ga+1)}{G(\ga+1)G(2n+\ga+1)},\nonumber
\eea
where $G(\cdot)$ is the Barnes $G$-function. Now we look at $D_{n}(\infty)$, i.e.
$$
D_{n}(\infty)=\frac{1}{n!}\int_{[0,\infty)^{n}}\prod_{1\leq i<j\leq n}(x_i-x_j)^2\prod_{k=1}^{n}x_{k}^{\ga}\mathrm{e}^{-4nx_{k}}dx_{k}.
$$
By changing variables $4nx_{\ell}=y_{\ell}, \ell=1,2,\ldots,n$, we get
\bea
D_{n}(\infty)&=&(4n)^{-n(n+\ga)}\cdot\frac{1}{n!}\int_{[0,\infty)^{n}}\prod_{1\leq i<j\leq n}(y_i-y_j)^2\prod_{k=1}^{n}y_{k}^{\ga}\mathrm{e}^{-y_{k}}dy_{k}\nonumber\\[5pt]
&=&(4n)^{-n(n+\ga)}\widehat{D}_{n}(\infty)\nonumber\\[5pt]
&=&(4n)^{-n(n+\ga)}\frac{G(n+1)G(n+\ga+1)}{G(\ga+1)},\nonumber
\eea
where we have used (\ref{dni}). Substituting the above expressions for $A_n(\gamma)$ and $D_{n}(\infty)$ into (\ref{pn}), we arrive at, as $\al\rightarrow 0$,
$$
\ln\mathbb{P}(n,\gamma,\alpha)=n(n+\ga)\ln(4n\al)+\ln G(n+1)+\ln G(n+\ga+1)-\ln G(2n+\ga+1)+o_n(\al).
$$

By using the asymptotic formula of Barnes $G$-function (see, for example, formula (A.6) in \cite{Voros}), i.e.,
\be\label{bg}
\ln G(z+1)=z^2\left(\frac{\ln z}{2}-\frac{3}{4}\right)+\frac{z}{2}\ln (2\pi)-\frac{\ln z}{12}+\zeta'(-1)+O(z^{-1}),\qquad z\rightarrow\infty,
\ee
where $\zeta(\cdot)$ is the Riemann zeta function, we obtain
\bea\label{pn1}
\ln\mathbb{P}(n,\gamma,\alpha)&=&\left(\frac{3}{2}n^2+n\ga-\frac{1}{12}\right)\ln n+\left(\frac{(n+\ga)^2}{2}-\frac{1}{12}\right)\ln(n+\ga)\nonumber\\[10pt]
&&-\left(\frac{(2n+\ga)^2}{2}-\frac{1}{12}\right)\ln(2n+\ga)+n(n+\ga)\left(\frac{3}{2}+\ln (4\al)\right)\nonumber\\[10pt]
&&+\zeta'(-1)+\tilde{\delta}_{n}(\ga)+o_n(\al),
\eea
where $\tilde{\delta}_{n}(\ga)$ depends only on $n$ and $\ga$, and $\tilde{\delta}_{n}(\ga)\rightarrow 0$ as $n\rightarrow\infty$ for any given $\ga$.

\begin{remark}
When $\ga=0$, formula (\ref{pn1}) is consistent with formula (27) of Deift et al. \cite{Deift}.
\end{remark}

To continue, we integrate identity (\ref{phi}) from $\al_0$ to any $\al$ with $0<\al_0<\al\leq1-\frac{s_0}{(2n)^{2/3}},\; s_0>0$, and find
\bea\label{pn2}
&&\ln\mathbb{P}(n,\ga,\al)-\ln\mathbb{P}(n,\ga,\al_0)\nonumber\\[5pt]
&=&n^2\left(\ln\frac{\al}{\al_0}+\frac{\al^2-\al_0^2}{2}-2(\al-\al_0)\right)
+n\ga\left(\ln\frac{\al}{\al_0}-(\al-\al_0)\right)\nonumber\\[10pt]
&&+\frac{1}{8}\left((4\ga^2-1)\ln\frac{1+\al}{1+\al_0}-\ln\frac{1-\al}{1-\al_0}\right)+\frac{\ga\left(2(1-\al)\ga^2-1\right)}{8n(1-\al^2)}\nonumber\\[10pt]
&&-\frac{\ga\left(2(1-\al_0)\ga^2-1\right)}{8n(1-\al_0^2)}+O\left(\frac{1}{n^2(1-\al)^3}\right)-O\left(\frac{1}{n^2(1-\al_0)^3}\right).
\eea
Substituting formula (\ref{pn1}) for $\ln\mathbb{P}(n,\ga,\al_0)$ into (\ref{pn2}) and taking the limit $\al_0\rightarrow 0$, we establish the following theorem.
\begin{theorem}
For any $0<\al\leq1-\frac{s_0}{(2n)^{2/3}},\; s_0>0$, we have as $n\rightarrow\infty$,
\bea\label{pn3}
\ln\mathbb{P}(n,\ga,\al)&=&n^2\left(\frac{3}{2}-2\al+\frac{\al^2}{2}+\ln(4\al)\right)+n\ga\left(\frac{3}{2}-\al+\ln(4\al)\right)
+\left(\frac{3}{2}n^2+n\ga-\frac{1}{12}\right)\ln n\nonumber\\
&&+\left(\frac{(n+\ga)^2}{2}-\frac{1}{12}\right)\ln(n+\ga)-\left(\frac{(2n+\ga)^2}{2}-\frac{1}{12}\right)\ln(2n+\ga)\nonumber\\
&&+\frac{1}{8}\left((4\ga^2-1)\ln(1+\al)-\ln(1-\al)\right)+\zeta'(-1)+\frac{\ga\left(2(1-\al)\ga^2-1\right)}{8n(1-\al^2)}\nonumber\\
&&+O\left(\frac{1}{n^2(1-\al)^3}\right)+\delta_{n}(\ga),
\eea
where $\delta_{n}(\ga)$ depends only on $n$ and $\ga$, and $\delta_{n}(\ga)\rightarrow 0$ as $n\rightarrow\infty$ for any given $\ga$.
\end{theorem}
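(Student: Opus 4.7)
The plan is to combine the two asymptotic ingredients that the preceding development has made available: the large-$n$ expansion of the logarithmic derivative $\frac{d}{d\al}\ln\mathbb{P}(n,\ga,\al)$ furnished by the Lemma, and the small-$\al$ expansion of $\ln\mathbb{P}(n,\ga,\al)$ obtained from the multiple integral. The bridge between them is integration in $\al$ from a small reference point $\al_0$ up to the target $\al$, followed by passage to the limit $\al_0\to 0^+$.

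First I would fix $s_0>0$ and restrict $\al$ to $0<\al\leq 1-\tfrac{s_0}{(2n)^{2/3}}$, so that the factor $(1-\al)^{-k}$ in the error term of \eqref{phi} remains controlled. I would then integrate \eqref{phi} over $[\al_0,\al]$ term by term. The $n^2$ piece gives $n^2\bigl(\ln(\al/\al_0)+\tfrac12(\al^2-\al_0^2)-2(\al-\al_0)\bigr)$; the $n\ga$ piece gives $n\ga\bigl(\ln(\al/\al_0)-(\al-\al_0)\bigr)$; the $O(1)$ rational term integrates to $\tfrac18\bigl((4\ga^2-1)\ln\tfrac{1+\al}{1+\al_0}-\ln\tfrac{1-\al}{1-\al_0}\bigr)$ after partial fractions; and the $1/n$ term integrates to $\tfrac{\ga(2(1-\al)\ga^2-1)}{8n(1-\al^2)}$ minus its value at $\al_0$. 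This reproduces equation \eqref{pn2} exactly.

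Next I would invoke the already-derived formula \eqref{pn1} for $\ln\mathbb{P}(n,\ga,\al_0)$ as $\al_0\to 0$, substitute it into \eqref{pn2}, and collect terms. The $n^2\ln\al_0$ contributions cancel against those from $n(n+\ga)\ln(4\al_0)$ in \eqref{pn1}, as do the $n\ga\ln\al_0$ terms, leaving the combination $n^2(\tfrac32-2\al+\tfrac{\al^2}{2}+\ln(4\al))+n\ga(\tfrac32-\al+\ln(4\al))$ together with the Barnes $G$-function contributions and the constant $\zeta'(-1)$. Sending $\al_0\to 0$ eliminates $o_n(\al_0)$ and the $\al_0$-dependent pieces of the integrated expansion, while the residual $n$-dependent remnants are gathered into a single error $\delta_n(\ga)$ that tends to $0$ as $n\to\infty$ for fixed $\ga$.

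The main obstacle is bookkeeping rather than analytic difficulty: one must verify that in the limit $\al_0\to 0$ all $\al_0$-dependent contributions either cancel exactly or get absorbed into $\delta_n(\ga)$, and that the error term $O\bigl(1/(n^2(1-\al_0)^3)\bigr)$ does not spoil the pointwise-in-$n$ limit. Since for each fixed $n$ we are free to let $\al_0\to 0^+$ before any $n\to\infty$ consideration (the bound in \eqref{phi} is uniform on compact subsets of $(0,1)$), this is legitimate. A secondary check is that the remaining $\al$-dependent error $O(1/(n^2(1-\al)^3))$ is admissible over the full range $\al\leq 1-s_0/(2n)^{2/3}$; this is the point where the restriction to the soft-edge regime enters, and it suffices to absorb all uniformly small residuals into the stated error terms.
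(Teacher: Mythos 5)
Your proposal is correct and follows essentially the same route as the paper: integrate the expansion \eqref{phi} over $[\al_0,\al]$ to obtain \eqref{pn2}, substitute the small-$\al_0$ formula \eqref{pn1} for $\ln\mathbb{P}(n,\ga,\al_0)$, and let $\al_0\to 0$ so that the $\ln\al_0$ terms cancel and the residual $n$-dependent pieces are absorbed into $\delta_n(\ga)$. The bookkeeping you describe, including the cancellation of the $n(n+\ga)\ln\al_0$ contributions and the absorption of the $O(1/(n^2(1-\al_0)^3))$ remainder, matches the paper's argument.
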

\begin{remark}
From the asymptotic formula (\ref{bg}) of the Barnes $G$-function, we can show that $\delta_{n}(\ga)=O(\frac{1}{n})\: (n\rightarrow\infty)$ for any given $\ga$. In addition, if $\ga=0$, then (\ref{pn3}) becomes
\bea
\ln\mathbb{P}(n,0,\al)&=&n^2\left(\frac{3}{2}+\ln\al-2\al+\frac{\al^2}{2}\right)-\frac{1}{12}\ln n-\frac{1}{8}\ln(1-\al^2)\nonumber\\
&&+\frac{1}{12}\ln 2+\zeta'(-1)+O\left(\frac{1}{n^2(1-\al)^3}\right)+\delta_{n},\nonumber
\eea
where $\delta_{n}$ depends only on $n$, and $\delta_{n}\rightarrow 0$ as $n\rightarrow\infty$.
This agrees with (162) of Deift et al. \cite{Deift}.
\end{remark}

In the end, we give the asymptotic formula of the Fredholm determinant
$$
\det(I-K_{\mathrm{Airy}}),
$$
where $K_{\mathrm{Airy}}$ is the integral operator with the Airy kernel
$$
K_{\mathrm{Airy}}(x,y)=\frac{\mathrm{Ai}(x)\mathrm{Ai}'(y)-\mathrm{Ai}(y)\mathrm{Ai}'(x)}{x-y}
$$
acting on $L^{2}(-s,\infty)$.

For any $s>s_0$ and sufficiently large $n$, we set
$$
\al=1-\frac{s}{(2n)^{2/3}}.
$$
Substituting it into (\ref{pn3}) and taking the limit $n\rightarrow\infty$, the r.h.s. of (\ref{pn3}) becomes
$$
-\frac{s^3}{12}-\frac{1}{8}\ln s+\frac{1}{24}\ln 2+\zeta'(-1)+O(s^{-3}),
$$
and the l.h.s. of (\ref{pn3}) approaches $\ln\det(I-K_{\mathrm{Airy}})$. Therefore, we establish the following asymptotic formula of the Airy determinant as $s\rightarrow+\infty$,
\be\label{airy}
\ln\det(I-K_{\mathrm{Airy}})=-\frac{s^3}{12}-\frac{1}{8}\ln s+\frac{1}{24}\ln 2+\zeta'(-1)+O(s^{-3}).
\ee

\begin{remark}
The constant term in (\ref{airy}), i.e. $\frac{1}{24}\ln 2+\zeta'(-1)$,
was conjectured by Tracy and Widom \cite{TW1}, and proved by Deift et al. \cite{Deift} where (\ref{airy}) was also derived but with the order term $O(s^{-3/2})$. Our order term $O(s^{-3})$ coincides with formula (1.19) in \cite{TW1}. In addition, Baik et al. \cite{Baik} gave an alternative proof of (\ref{airy}) by using an integral expression of the Tracy-Widom distribution.
\end{remark}

\section*{Acknowledgments}
The work of Shulin Lyu was supported by National Natural Science Foundation of China under grant number 11971492. Chao Min was supported by the Scientific Research Funds of Huaqiao University under grant number 600005-Z17Y0054.
Yang Chen was supported by the Macau Science and Technology Development Fund under grant number FDCT 023/2017/A1 and by the University of Macau under grant number MYRG 2018-00125-FST.

\end{document}